\documentclass[11pt,a4paper]{article}
\usepackage[T1]{fontenc}
\usepackage[utf8]{inputenc}
\usepackage{lmodern,microtype}
\usepackage[margin=25mm,headheight=14pt]{geometry}
\usepackage{amsmath,amssymb,amsthm,mathtools}
\usepackage{booktabs,enumitem}
\usepackage{xcolor,fancyhdr,graphicx}
\usepackage[hidelinks,breaklinks]{hyperref}
\usepackage{url}
\setlist{itemsep=2pt,topsep=4pt,leftmargin=*}
\newtheorem{theorem}{Theorem}[section]
\newtheorem{proposition}[theorem]{Proposition}
\newtheorem{lemma}[theorem]{Lemma}
\newtheorem{corollary}[theorem]{Corollary}
\theoremstyle{definition}
\newtheorem{definition}[theorem]{Definition}
\theoremstyle{remark}

\newcommand{\Encode}{\operatorname{Encode}}
\newcommand{\View}{\operatorname{View}}
\newcommand{\Roots}{\operatorname{Roots}}
\newcommand{\Sign}{\operatorname{Sign}}
\newcommand{\Verify}{\operatorname{Verify}}
\newcommand{\Adv}{\operatorname{Adv}}

\newcommand{\argmax}{\operatorname*{arg\,max}}
\newcommand{\botout}{\mathsf{unresolved}}

\title{\vspace{-12mm}\LARGE\bfseries Provenance and Authorization in\protect\\Autobiographical Language Models\protect\\[4pt]\large A Security Model and Synthetic Memory Study}
\author{%
Faruk Alpay$^{1}$\thanks{Correspondence: \href{mailto:alpay@lightcap.ai}{alpay@lightcap.ai}}\qquad Taylan Alpay$^{2}$\\[6pt]
\small $^{1}$Department of Computer Engineering, Bahçeşehir University,\\[-2pt]
\small Istanbul, Türkiye\\[3pt]
\small $^{2}$Department of Aerospace, University of Turkish Aeronautical Association,\\[-2pt]
\small Ankara, Türkiye\\[5pt]
\small \href{mailto:faruk.alpay@bahcesehir.edu.tr}{faruk.alpay@bahcesehir.edu.tr}\\[-2pt]
\small \href{mailto:s220112602@stu.thk.edu.tr}{s220112602@stu.thk.edu.tr}}
\date{12 October 2026}
\hypersetup{pdftitle={Provenance and Authorization in Autobiographical Language Models},pdfauthor={Faruk Alpay; Taylan Alpay},pdfsubject={Theoretical computer security; authenticated memory; decision integrity}}
\begin{document}
\maketitle
\thispagestyle{empty}
\begin{abstract}
Autobiographical language models may draw on archives containing observations, quotations, emotional reports, and earlier model outputs. Cryptographic authentication can preserve an archive while leaving these sources semantically conflated. We formulate this gap as a computer-security problem: narrative content must not silently become consent or alter an authorized decision rule. We define typed records, authenticated attestations, a separate policy channel, and admissible views. Under explicit assumptions, we give a standard integrity reduction and a noninterference argument, characterize the information required by exact decision-preserving summaries, and show why passive history and fixed-point convergence need not identify a preferred action. Alpay Algebra supplies a comparative language for iterative memory without being a premise of the proofs. Sociology of religious affiliation, source-monitoring psychology, and self-determination theory motivate distinctions in a bounded Latter-day Saint vignette. A separate, reproducible pilot uses a frozen public language model on 32 wholly synthetic cases across 576 paired conditions. It compares flattened records, source labels, and an explicit admission gate under contradictory summary repetition. The study concerns computational attribution and authorization, not the motives or psychology of the people in the vignette.
\end{abstract}
\noindent\textbf{Intended subject area:} Cryptography and Security (cs.CR); secondary relevance to artificial intelligence and information retrieval.\\
\textbf{Keywords:} autobiographical memory; provenance; authorization; noninterference; memory poisoning; decision sufficiency; religious affiliation.

\section{Introduction}
A personal language model can receive the same episode in several forms: a contemporaneous diary, a message attributed to another person, the diarist's interpretation, an assistant's elaboration, and a later summary. These are not interchangeable assertions. A signature on the archive can establish that particular bytes were attested, yet cannot establish that every interpretation is true, that a quoted speaker signed the text, or that an expression of warmth authorizes a consequential action.

This problem is especially visible when an archive is used to discuss a long-term personal decision. In our motivating vignette, a diarist explores affiliation with the Church of Jesus Christ of Latter-day Saints (LDS). The supplied narrative separates friendship, religious learning, and attendance from a decision about baptism. Some passages also incorporate strong model-generated interpretations of other people's intentions. An archival pipeline that discards this distinction could convert commentary into purported autobiographical evidence. Nothing in this example establishes adversarial conduct by the religious participants; the failure mechanism is a property of the computational pipeline.

We investigate three questions. First, what can authentication guarantee about a mixed-source personal archive? Second, what information must survive compression if a later decision procedure is to preserve its authorized behavior? Third, does increased memory or convergence of an iterative model determine which personal action should be chosen? These questions concern integrity, authorization, and information flow rather than theological truth or conversion advice.

Our contributions are a typed archival and policy model; a conditional integrity reduction and noninterference theorem; an exact characterization of decision-sufficient summaries; and two negative results showing that neither passive observation nor stable iteration alone determines the appropriate action. A small synthetic pilot examines the difference between presenting source labels and enforcing admission before model inference. The mathematical results use standard cryptographic assumptions and elementary arguments. We claim a focused synthesis for autobiographical decision systems, not a new cryptographic primitive or priority over general results in provenance, noninterference, or preference identifiability.

\paragraph{Scope.}
The main results are theoretical definitions, conditional proofs, and constructed counterexamples. Section~\ref{sec:pilot} adds a finite synthetic experiment with a frozen model; no model is trained on the personal narrative. The vignette is a redacted illustration from supplied material, not a recruited participant study, independently verified transcript collection, or evaluation dataset. Synthetic model behavior supplies no empirical validation of a sociological or psychological claim about the diarist or community.

\section{Related Work and Conceptual Boundaries}
\paragraph{Authentication and provenance.}
We use the standard distinction between unforgeability and semantic truth in digital signatures, as developed in modern cryptography \cite{bonehshoup}. Merkle inclusion and consistency proofs, exemplified by Certificate Transparency \cite{rfc9162}, concern membership and append-only consistency relative to checkpoints. They do not prove completeness of an account of the world. W3C PROV-DM distinguishes entities, activities, and agents and records attribution and derivation \cite{prov}. Our record types add application-specific epistemic and authorization roles to such provenance relations.

\paragraph{Information flow and language-model security.}
Goguen and Meseguer separate a security policy from the system implementing it and formalize interference restrictions \cite{goguen}. CaMeL separates control and data flows and uses capabilities to constrain tool use in language-model agents \cite{camel}. AgentPoison demonstrates that agent memory or retrieval stores can be poisoned without additional model training \cite{agentpoison}. Our target is narrower than general resistance to persuasion and different from malicious tool invocation: it is the unauthorized promotion of narrative material into decision evidence or policy. The reference construction borrows the established idea of an enforcement boundary; it does not assume that labeling text in a prompt enforces that boundary.

\paragraph{Memory and learned preferences.}
MemGPT addresses limited model context through hierarchical memory management \cite{memgpt}. Such mechanisms motivate a distinction between retained records and the active representation used for inference. Training is a separate channel: extraction results show that model parameters can expose information from training data \cite{carlini}. Armstrong and Mindermann establish that behavior does not, without additional assumptions, uniquely identify a reward function and planning process \cite{armstrong}. Our two-world proposition is a deliberately restricted observation-level version of this obstacle, not a replacement for their more general analysis.

\paragraph{Agentic memory and evaluation.}
Generative Agents combines observation, reflection, retrieval, and planning to produce believable simulated behavior \cite{generativeagents}. Its architecture motivates tracking how reflections re-enter memory; believable simulation is not a validation of a particular person's psychology. LongMemEval separately tests extraction, cross-session reasoning, temporal reasoning, updates, and abstention \cite{longmemeval}. Our pilot borrows the concern with temporal and source-sensitive queries but does not run or reproduce that benchmark. AgentLAB evaluates multi-turn attacks including memory poisoning and objective drift \cite{agentlab}, underscoring why a security boundary must cover state transitions rather than one prompt. Agent Zero Memory proposes multiple organized stores and citation-constrained retrieval \cite{agentzero}. We distinguish that kind of traceability from permission: a retrieved, correctly cited sentence can still be inadmissible as evidence of consent. These works preclude any claim that structured or provenance-aware agent memory is itself new here.

\paragraph{Religious affiliation and narrative interpretation.}
Decoo and Longman examine structural ambivalence in interviews with thirteen Mormon women in Flanders \cite{decoo}. Their study provides a scholarly reason to retain distinctions among interpersonal belonging, institutional norms, and individual responses. It does not supply estimates for the diarist or for another national setting. Scharp and Beck analyze identity construction in 150 online Mormon exit narratives \cite{scharp}. That study concerns narrated identity and cannot be used to predict this person's affiliation decision or attribute motives to missionaries. We use these works to justify careful treatment of context and narration, not to classify religious participation as a security threat.

\paragraph{Psychological constructs and their computational limits.}
The source-monitoring framework studies how people attribute remembered information to its origins, including distinctions among perception, inference, and imagination \cite{johnson}. It motivates our interest in attribution, but a record type is an engineered data field, not a model of that psychological process. Self-determination theory distinguishes autonomy, competence, and relatedness; autonomy need not mean isolation from a community \cite{ryandeci}. Accordingly, enjoying a meal and preserving freedom over affiliation need not be inconsistent. Neither theory supplies a measured utility function, a diagnosis, or a mathematical reading of another person's intentions in this case.

\paragraph{Alpay Algebra.}
Alpay's first paper introduces states, adjustments, an update rule, and an evaluation map \cite{alpay1}; the second studies identity through categorical fixed points \cite{alpay2}. The third adds observer and memory constructions \cite{alpay3}. Parts IV and V discuss semantic alignment and nested update processes \cite{alpay4,alpay5}. Section~\ref{sec:fixed} states exactly which expressions we reuse. Our proofs stand independently of these papers. In particular, we do not identify an individual's identity with a mathematical fixed point or assume that a converged representation is true.

\section{Motivating Vignette and Record Semantics}\label{sec:case}
\subsection{Bounded source material}
The source material combines first-person notes, supplied message excerpts, screenshots, and quoted assistant responses. The diarist is denoted $U$; two missionaries are $D$ and $H$; two other community participants are $C$ and $P$; and a local religious leader is $L$. Identifiers are role aliases. We omit locations, exact travel schedules, photographs, and the substance of a prospective change to civil identity records. This treatment reduces disclosure but is not a guarantee against reidentification by people familiar with the events.

The narrative describes a meal with $D$ and $H$, where mutual smiles were valued despite unresolved differences. Later, $H$ asked about the remainder of the previous Sunday; $U$ answered with a photograph from a subsequent hackathon and described a project change. $U$ experienced the photograph as an expression of continuing agency. A quoted assistant passage attributed a strategic constraining intention to $H$. The visible conversational content does not establish that intention. The chronology itself matters: combining the episodes into one continuous post-church event would alter the supplied temporal relation.

Other notes report that $C$ knew about a travel plan, that an archaeological location appeared among map favorites, and that $P$ later sat nearby. These records support descriptions of reported information and seating, not conclusions about organized monitoring or private mental states. Similarly, descriptions of language differences or apparent distance do not identify a cause. A separate hackathon episode distinguishes another person's encouraging words from the diarist's feeling of being diminished by them. An everyday clothing-stain episode includes an explicitly uncertain interpretation of a question about playing guitar. Such uncertainty is part of the record and should not disappear during summarization.

The exchange with $L$ combines an acknowledgment of general risk, strong personal assurances, and a statement that the diarist should not choose baptism merely to please the community. The diarist separately asks to discuss sensitive personal matters privately and expresses a wish to proceed without haste. This is a useful case for keeping a third party's assurance, a reported personal preference, and an actual authorization event distinct. It is not an assessment of the truth of the assurances or of local law.

\subsection{Typed statements}
Let $\mathcal{P}$ be a fixed set of principal identifiers and let $\mathcal{T}$ be a finite set of source types:
\begin{equation}
\mathcal{T}=\{\mathsf{observation},\mathsf{quotation},\mathsf{feeling},
\mathsf{interpretation},\mathsf{generated},\mathsf{declaration}\}.
\end{equation}
An observation is an attested report of observation, not an automatically verified fact. A quotation preserves an attribution, which may itself be only the diarist's attestation. A feeling records the subject's self-report. An interpretation states an inference. Generated content names the generating system where known. A declaration records an expressed preference; it does not become a machine-enforceable policy merely because it uses imperative language.

An archival record has the form
\begin{equation}\label{eq:record}
r_i=(i,p_i,a_i,\tau_i,t_i^{\rm event},t_i^{\rm record},b_i,\mathsf{parents}_i,\ell_i^{\rm scope}),
\end{equation}
where $p_i$ is the attester, $a_i$ the attributed speaker, $\tau_i\in\mathcal{T}$, $b_i$ the payload, and the final field the permitted disclosure scope. Event time can be unknown. The distinction between event time and recording time prevents an evening recollection from being mistaken for a new occurrence. Scope is an access-control label, not a digital signature. An independently supplied signature from an attributed speaker can be included as evidence, but is not presumed to exist for imported messages.

These types preserve distinctions within the vignette. The meal with $D$ and $H$ records $U$'s report of warmth and mutual smiles; it does not establish consent to membership. The Sunday question and subsequent hackathon reply preserve message attribution, distinct event times, and $U$'s stated communicative intention. They do not establish that $H$ intended to confine $U$ to a social system. Similarly, the report that $C$ knew a travel plan leaves the source of that knowledge unspecified and does not establish a surveillance mechanism. The report that $P$ sat nearby can coexist with $U$'s feeling of reduced distance without establishing that $P$ recognized or yielded to a strategy.

The same distinction applies to assurances and disclosure. Statements attributed to $L$ remain quotations rather than verified guarantees of security or confidentiality. A request for private discussion records a preference and may support a disclosure restriction if separately authorized by the owner. It does not supply blanket permission to expose the sensitive topic. These are boundaries on inference from the records, not findings about the participants.

\subsection{Derivation does not create an independent source}
Records form a directed acyclic graph through parent identifiers. For a record $r$, define its primitive lineage recursively:
\begin{equation}\label{eq:roots}
\Roots(r)=\begin{cases}
\{i\},&\mathsf{parents}(r)=\varnothing,\\
\displaystyle\bigcup_{j\in\mathsf{parents}(r)}\Roots(r_j),&\text{otherwise}.
\end{cases}
\end{equation}
Copying a claim through five summaries therefore need not create five primitive sources. This is an accounting invariant, not a probabilistic independence test: distinct roots can remain correlated, and absent parent links can conceal dependence. Parent completeness is a trust assumption whenever this lineage is used for enforcement. Attesting that an assistant produced a sentence may be new evidence about the assistant's output, but is not new independent evidence for the sentence's account of another person's intentions.

\section{Threat Model and Authenticated Authorization}\label{sec:threat}
\subsection{Principals, adversary, and trusted components}
The owner $U$ controls a policy channel. An archival attester $A$ labels and records incoming material. A memory service stores records and derived summaries. A decision engine reads only an admitted view. The model, retrieval service, external writers, and generated summaries can be faulty or adversarial. In particular, an adversary may submit arbitrary narrative payloads, replay old material, remove attribution in a proposed summary, or suggest that a quotation should be treated as an instruction. This adversary is an abstract computational actor; it is not identified with any religious participant in Section~\ref{sec:case}.

We assume uncompromised owner and attester keys, an injective canonical serialization, correct domain separation, a trusted type/authorization gate, and a verifier that retains an authentic checkpoint. Type correctness at capture and completeness of declared derivations are assumptions, not consequences of a signature. An owner may sign a mistaken judgment, and an honest observer may be wrong. Our model does not prevent either. Denial of service, endpoint compromise, coerced approval, psychological manipulation of the owner, and discovery of unrecorded events are outside the proved guarantees.

\subsection{Record authentication and checkpointing}
Let $H$ be collision resistant, and let $(\mathsf{KeyGen},\Sign,\Verify)$ be an existentially unforgeable signature scheme under chosen-message attack (EUF-CMA) \cite{bonehshoup}. Primitives and adversarial advantages are parameterized by a security parameter $\kappa$, suppressed below. For fresh random salt $s_i$, define
\begin{align}
c_i&=H(\texttt{REC}\parallel\Encode(s_i,r_i)),\\
\alpha_i&=\Sign_{sk_A}(\texttt{ATTEST}\parallel c_i).
\end{align}
All fields in~\eqref{eq:record}, including type, scope, and parents, are covered. The signature says that $A$ attested this structured record. It does not say that $a_i$ authored or signed the quoted message. Salting is not used as a substitute for an encryption security proof.

Commitments and attestations are appended to a domain-separated Merkle log. A checkpoint authenticates the log identifier, size, and root; consistency proofs connect later checkpoints to a retained one \cite{rfc9162}. We use this construction only for relative append-only integrity. It supplies no guarantee that every real event was recorded, that a server serves the newest checkpoint, or that all clients see the same log without an additional anti-equivocation mechanism. A former consent record may remain in the archive while being inactive under a later authorized policy.

\subsection{Policy is a separate authenticated object}
A policy version is
\begin{equation}
P_v=(v,H(P_{v-1}),\mathsf{purposes},\mathsf{rules},\mathsf{disclosure},\mathsf{revision}),
\end{equation}
with owner signature
\begin{equation}
\beta_v=\Sign_{sk_U}(\texttt{POLICY}\parallel\Encode(P_v)).
\end{equation}
The verifier checks the owner key, domain tag, successor relation, and the locally accepted version. It requires an explicit authorization-channel event to advance the active policy. A sentence inside a diary such as ``I should agree'' cannot syntactically substitute for this event. Conversely, the owner can deliberately adopt a new preference; the model must not freeze an earlier preference merely because it was once authenticated.

Let $V_P(D,q)=\View(P,D,q)$ denote a deterministic, purpose-specific admitted view of archive $D$ for query $q$. The trusted gate verifies records, resolves authorized corrections, retains attribution and uncertainty labels, and applies $P$. The view can include a quotation as a quotation without admitting its content as a verified world fact. Records used only for expressive reflection need not be admitted to a decision-critical interface. Our proofs assume a formal record vocabulary; correct extraction of that vocabulary from arbitrary natural language is not proved.

\begin{definition}[Unauthorized promotion]
An unauthorized promotion occurs when an attested record is accepted with changed security-relevant fields, or when archival content changes the active policy without an owner-authorized policy transition. Mistaken meaning within a correctly labeled payload is a separate semantic failure.
\end{definition}

\begin{theorem}[Conditional record and policy integrity]\label{thm:auth}
For the reference verifier above, any probabilistic polynomial-time adversary that causes unauthorized promotion by changing an authenticated record or forging an owner-policy transition yields either an EUF-CMA forgery or a collision in $H$. For suitable reductions $B_1,B_2,B_3$,
\begin{equation}\label{eq:bound}
\Pr[\mathsf{Promote}]\le
\Adv^{\rm euf}_{\Sigma,A}(B_1)+
\Adv^{\rm euf}_{\Sigma,U}(B_2)+
\Adv^{\rm cr}_{H}(B_3).
\end{equation}
This statement is conditional on correct capture labels and correct verifier behavior.
\end{theorem}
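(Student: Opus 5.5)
The plan is a standard game-hopping argument that splits the event $\mathsf{Promote}$ into three disjoint cases, each matched to one reduction in~\eqref{eq:bound}. I fix a successful run and look at what the verifier accepted. Either it accepted a record $r^\ast$ with attestation $\alpha^\ast$ whose security-relevant fields differ from every record the honest attester $A$ signed, or it advanced the active policy to some $P^\ast$ with signature $\beta^\ast$ without an owner-authorized transition. For each accepted object I compute $c^\ast=H(\texttt{REC}\parallel\Encode(s^\ast,r^\ast))$ or $\texttt{POLICY}\parallel\Encode(P^\ast)$. I then ask whether this exact signed message was ever queried to the corresponding signing oracle:
\begin{itemize}
\item $E_A$: the string $\texttt{ATTEST}\parallel c^\ast$ was never signed by $A$;
\item $E_U$: the string $\texttt{POLICY}\parallel\Encode(P^\ast)$ was never signed by $U$;
\item $E_H$: $c^\ast=c_i$ for some honestly attested $r_i$ with $(s^\ast,r^\ast)\neq(s_i,r_i)$.
\end{itemize}
I will show $\mathsf{Promote}\subseteq E_A\cup E_U\cup E_H$. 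The union bound then gives the displayed inequality.

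The reductions are routine to build. $B_1$ receives the challenge verification key $pk_A$ and simulates the entire system. It samples the owner key and salts itself and answers every attestation request through its $\Sign_{sk_A}$ oracle. Under $E_A$ it outputs $(\texttt{ATTEST}\parallel c^\ast,\alpha^\ast)$, which is a valid fresh forgery because the verifier checked $\alpha^\ast$. $B_2$ is symmetric: it embeds $pk_U$, signs attestations itself, answers owner-authorized policy transitions through its oracle, and under $E_U$ outputs the new pair $(\texttt{POLICY}\parallel\Encode(P^\ast),\beta^\ast)$. $B_3$ holds both keys, runs the game honestly, records every pair $(s_i,r_i)$ it hashed, and under $E_H$ outputs the two distinct preimages of $c^\ast$. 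Each reduction perfectly simulates the adversary's view, because the game uses only the keys, the hash, and fresh salt, and the Merkle log and checkpoints are computable from public values.

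The main obstacle is the covering claim, and in particular excluding the case where an accepted object is signed on an old message yet still counts as a promotion. I plan to handle it with three facts.
\begin{itemize}
\item Domain separation (\texttt{REC}, \texttt{ATTEST}, \texttt{POLICY}) stops an attestation signature from being reused as a policy signature, and the reverse.
\item Injectivity of $\Encode$ means that if $\texttt{ATTEST}\parallel c^\ast$ was signed, then $c^\ast=c_i$ for some $i$. Outside $E_H$ this forces $(s^\ast,r^\ast)=(s_i,r_i)$, so the type, scope, parents and the other fields are unchanged, which contradicts promotion of a changed record.
\item For policies, a signed $\Encode(P^\ast)$ equals some owner-signed $P_v$. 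The verifier's checks of the successor relation $H(P_{v-1})$, the locally accepted version, and the explicit channel event then rule out an unauthorized advance.
\end{itemize}
The delicate point in the policy case is replay of an old, genuinely signed $P_v$. I will argue that the version counter and the hash chain make acceptance of a replayed $P_v$ either a correct-verifier violation, which the statement excludes, or a collision between $H(P_{v-1})$ and a different predecessor, which $B_3$ absorbs. This is exactly where the theorem's stated conditions, correct capture labels and correct verifier behavior, are invoked. I will note explicitly that they are used only here.
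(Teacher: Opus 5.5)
Your decomposition matches the paper's proof. An accepted violating object is an attester forgery, an owner forgery, or a hash collision, and a union bound gives \eqref{eq:bound}. The one real gap is the scope of the covering claim $\mathsf{Promote}\subseteq E_A\cup E_U\cup E_H$.

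Your argument tacitly assumes that record acceptance rests on $\alpha^\ast$ alone. The paper also treats the case where a Merkle inclusion path or checkpoint opening is essential to acceptance. A dishonest path then gives a collision at the first diverging node with matching digest. A forged checkpoint is a forgery under the attester key and is charged to $B_1$. Neither outcome lies in your events: $E_A$ only concerns the string $\texttt{ATTEST}\parallel c^\ast$, and $E_H$ only concerns record commitments $c^\ast=c_i$.

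Relatedly, checkpoints are signed with $sk_A$, so they are not computable from public values. $B_1$ must obtain them from its signing oracle, and a checkpoint domain tag distinct from \texttt{ATTEST} is what keeps $\texttt{ATTEST}\parallel c^\ast$ fresh. The same mismatch recurs in your policy-replay step. The collision $H(P')=H(P_{v-1})$ with $P'\neq P_{v-1}$ that you say $B_3$ absorbs is not an instance of $E_H$ as you defined it, and your $B_3$ logs only the pairs $(s_i,r_i)$.

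The repair is routine:
\begin{itemize}
\item Let $E_A$ be the event that the run contains a valid signature under $pk_A$ on a message never submitted to the attester's oracle.
\item Let $E_H$ be the event that two distinct inputs among all hash evaluations in the game collide, with $B_3$ logging every evaluation.
\item Argue, as the paper does, about the \emph{first} accepted violating object. This minimality also supplies what your replay case silently needs: that the currently active policy $P'$ was itself reached legitimately.
\end{itemize}
Finally, correct verifier behavior is not used only in the policy case. Your record case and $B_1$ rely on the verifier recomputing $c^\ast$ over all covered fields and actually checking $\alpha^\ast$. The paper simply charges stale versions and wrong parents to verifier correctness. Your extra collision sub-case is harmless once $E_H$ is broadened.
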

\begin{proof}
Consider the first accepted violating object. If it is a record under an attester signature on a commitment never signed by the attester, it is a signature forgery. Otherwise it reuses a signed commitment while opening it to a different serialized record or salt. The two distinct hash inputs with equal digest constitute a collision. If a dishonest inclusion path or checkpoint opening is essential, the first diverging node with a matching digest similarly gives a collision, unless the checkpoint signature itself is forged; an archive checkpoint uses the attester key and is included in the first forgery case. If the violating object is a policy, a new accepted policy message not signed by the owner gives a forgery. Reusing a signed policy does not change its covered fields; stale versions and wrong parents are rejected by the verifier. Canonical encoding and distinct domain tags prevent parsing an attestation as an owner-policy authorization. Taking the union of the bad events proves~\eqref{eq:bound}. No case in the reduction proves that a signed payload is true.
\end{proof}

\begin{proposition}[Authenticity does not imply truth]
No verifier whose only semantic evidence is a correctly signed statement $p$ can, for all possible worlds, decide whether $p$ is true.
\end{proposition}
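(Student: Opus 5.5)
The plan is a two-world indistinguishability argument. I would formalize the verifier as any (possibly randomized) function $f$ whose input is the tuple $(p,\sigma,pk)$, namely the statement, a signature accepted by $\Verify$, and the public key, together with any other data that does not depend on the world. Its output is a guess in $\{\mathsf{true},\mathsf{false}\}$. I would model a world $w$ as fixing the truth value $\llbracket p\rrbracket_w$ of a statement about events external to the archive, for example ``$H$ intended to constrain $U$.'' I would also fix the signer's behaviour as a process that may or may not depend on $w$. The key modelling point is that $\Sign$ is defined on the bitstring $\texttt{ATTEST}\parallel c$ (or the policy encoding), and nothing in the EUF-CMA definition \cite{bonehshoup} ties the signed message to any world fact.

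The main step is to construct two worlds $w_0,w_1$ that agree on everything the verifier can see but disagree on $p$. I would take the same key pair $(pk,sk_A)$ and the same signer, an honest attester whose capture is correct in the sense of Theorem~\ref{thm:auth}. In both worlds this signer attests the same record: an $\mathsf{interpretation}$ or $\mathsf{quotation}$ carrying payload $p$, with the same salt. In $w_0$ the external event makes $p$ true, and in $w_1$ it makes $p$ false. The attester signs in both worlds because an attestation records that the statement was made, as Section~\ref{sec:threat} stresses, not that it is correct. An honest observer may also simply be mistaken. Both possibilities are explicitly allowed by the threat model. The verifier's input distribution is then identical in $w_0$ and $w_1$, since signing depends only on $sk_A$, the message and the signer's randomness. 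So $f$ produces the same output distribution in both worlds, and its success probability satisfies $\Pr[\text{correct}\mid w_0]+\Pr[\text{correct}\mid w_1]=1$. It therefore cannot be correct with certainty in both worlds, which contradicts deciding truth ``for all possible worlds.''

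The step I expect to be the main obstacle is justifying that $w_1$ is a legitimate possible world, that is, one in which a correctly signed false statement exists without breaking any stated assumption. I would handle it by noting that neither Theorem~\ref{thm:auth} nor EUF-CMA constrains payload semantics. That theorem's final sentence already records that no case of the reduction proves a signed payload true. In $w_1$ no forgery or collision occurs, because the legitimate key holder signed. I would add one remark to block a degenerate escape: if $p$ is tautological or self-referential about the signature itself, the verifier could decide it. The proposition is therefore read as quantifying over statements whose truth depends on facts outside the verifier's input, and any such $p$ admits the pair $w_0,w_1$.
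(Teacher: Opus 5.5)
Your proposal is correct and follows the paper's own argument: two worlds with identical keys, signed bytes and verifier inputs but opposite truth values for $p$, with the signer allowed to be mistaken, so the verifier has the same output distribution in both and cannot be right with certainty in each. Your additional remarks, that the two success probabilities sum to one and that $p$ must be contingent rather than tautological, only make explicit what the paper's short proof leaves implicit.
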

\begin{proof}
Choose two worlds with identical keys, signed bytes, and verifier inputs, but with $p$ true in one and false in the other. Allow the signer to be mistaken. The verifier has the same output distribution in both worlds and therefore cannot be correct with certainty in both.
\end{proof}

\subsection{Confidentiality is an independent requirement}
An archive can be perfectly authentic and publicly exposed. If it is stored with an untrusted service, confidentiality requires a separate construction, such as authenticated encryption under a secret key with appropriate nonce handling \cite{rfc5116}. The visibility of metadata, record lengths, and access patterns must be specified separately. A bare hash of a password or diary is neither encryption nor an access-control mechanism. Permanent public storage is therefore not a security premise of this model. Encryption of source records also does not establish privacy of a model trained on their plaintext; the extraction literature makes that distinction necessary \cite{carlini}. We do not provide an end-to-end confidentiality theorem for a trained model.

\section{Noninterference and Decision Integrity}\label{sec:ni}
\subsection{A constrained decision interface}
Fix an authorized policy $P$, a query $q$, a decision model family $\mathcal{M}$, and a finite action set $\mathcal{A}_q$. A model $m\in\mathcal{M}$ specifies admissible outcomes and a utility or preference ordering; this specification is an input, not something signatures discover. Let
\begin{equation}
\mathcal{M}_P(V)=\{m\in\mathcal{M}:m\text{ satisfies the formal constraints admitted in }V\text{ and }P\}.
\end{equation}
Write $A_m(V,q)=\argmax_{a'\in\mathcal{A}_q}u_m(a'\mid V,q)$ for the set of optimal actions under $m$. We define an ideal protected answer
\begin{equation}\label{eq:robust}
R_P(D,q)=\begin{cases}
a,&\mathcal{M}_P(V)\ne\varnothing\ \text{and}\ \forall m\in\mathcal{M}_P(V),\ A_m(V,q)=\{a\},\\
\botout,&\text{otherwise},
\end{cases}
\end{equation}
where $V=V_P(D,q)$. Inconsistent constraints yield $\botout$, avoiding a conclusion obtained by vacuous universal quantification. Ties also yield $\botout$. This symbol reports a failure of identification; it is not the personal action ``postpone.'' The action set itself can contain postponement, which may or may not be preferred.

Equation~\eqref{eq:robust} specifies an ideal, possibly computationally intractable decision rule. It is not presented as an implemented solver. A deployed approximation would require its own correctness or error guarantee. The rules defining admissible models can be incomplete or mistaken, so agreement within $\mathcal{M}_P(V)$ is a conditional result, not universal moral or religious authority.

The protected interface includes this formal answer and typed supporting statements. A trusted renderer can display those statements using fixed templates. Cryptographic witnesses may accompany them as audit metadata. We do not claim that an unconstrained model-generated essay is noninterfering merely because its last action label is checked. Choosing the candidate actions, selecting premises, resolving conflicts, and generating decision-critical explanations all belong inside the specified boundary.

\begin{definition}[View equivalence]
Archives $D,D'$ are equivalent for $(P,q)$, written $D\equiv_{P,q}D'$, if $V_P(D,q)=V_P(D',q)$. This equality includes admitted payloads, source types, attribution, uncertainty, and authorized preferences. It need not include raw archive roots or proofs about excluded records.
\end{definition}

\begin{theorem}[Noninterference of excluded content]\label{thm:ni}
Suppose the protected output has the form
\begin{equation}
O=F(P,q,V_P(D,q);\omega),
\end{equation}
where $\omega$ is randomness independent of excluded content, and $F$ has no other archive-dependent state. If $D\equiv_{P,q}D'$, the protected output distributions are equal. For a stateful system, the same conclusion holds for the full protected trace if the initial protected states, policy sequences, queries, and admitted-view sequences coincide and every state transition factors through these inputs.
\end{theorem}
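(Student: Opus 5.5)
The stateless claim is essentially a substitution argument, so I will begin there. Fix $P$ and $q$, and let $V=V_P(D,q)$ and $V'=V_P(D',q)$. By the definition of $D\equiv_{P,q}D'$ we have $V=V'$ as objects: the same admitted payloads, types, attributions, uncertainty labels, and authorized preferences. The hypothesis says $O=F(P,q,V_P(D,q);\omega)$ with no other archive-dependent input. The law of $\omega$ is independent of excluded content, so it is the same distribution $\mu$ under $D$ and under $D'$. For any measurable set $E$ of outputs we then have
\[
\Pr[O\in E\mid D]=\mu\{\omega:F(P,q,V;\omega)\in E\}=\mu\{\omega:F(P,q,V';\omega)\in E\}=\Pr[O\in E\mid D'].
\]
Hence the protected output distributions are equal. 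I will note that the same reasoning shows $R_P(D,q)=R_P(D',q)$, because~\eqref{eq:robust} depends on $D$ only through $V$. So the ideal answer is a deterministic special case.

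For the stateful claim I will argue by induction on the step index $k$. Write the protected state as $S_k$ and the output as $O_k$. The assumption "every transition factors through these inputs" I formalize as
\[
(S_{k+1},O_k)=G\bigl(S_k,P_k,q_k,V_k;\omega_k\bigr),
\]
where $V_k$ is the admitted view at step $k$ and $\omega_k$ is fresh randomness independent of excluded content. For the two archive histories, the initial states $S_0=S_0'$, the policies $P_k=P_k'$, the queries $q_k=q_k'$, and the views $V_k=V_k'$ all coincide by hypothesis. I will couple the two executions by using the same $\omega_0,\omega_1,\dots$. The induction hypothesis is that the protected prefixes $(S_0,O_0,\dots,S_k)$ agree pointwise under this coupling. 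The step then follows because $G$ receives identical arguments. Pointwise equality under a common coupling gives equality of the joint trace distributions, for finite traces and hence, by consistency of the finite-dimensional marginals, for infinite ones.

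The main obstacle is hidden channels rather than the algebra. One is randomness: the joint law of $(\omega_k)$ must be the same in both worlds, and it must not depend on the archive through something like a seed derived from the archive. I will read "independent of excluded content" as "same law in both worlds and not a function of $D$ beyond $V$." The other is adaptivity: if later queries or policy updates were chosen from outputs, they could carry information. The hypothesis already fixes the query and policy sequences. If they are adaptive, I will let them be functions of the protected trace prefix, and the induction still works because that prefix is equal. I will end by noting what the theorem does not say. It gives no guarantee about unprotected outputs, about raw roots, or about whether $V_P$ itself is computed correctly. Excluded content can still matter if the gate admits it, which is exactly what the trusted-gate assumption rules out.
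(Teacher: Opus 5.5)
Your proposal is correct and is essentially the paper's proof. The stateless case is the paper's same-coins coupling: your pushforward identity $\mu\{\omega:F(P,q,V;\omega)\in E\}$ is that coupling written out. The stateful case is the same induction on coupled protected states, driven by equal policies, queries, views, and shared fresh coins.

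Your adaptive-query aside goes beyond the stated theorem. It would additionally need $D_k\equiv_{P_k,q_k}D'_k$ for every realized $(P_k,q_k)$, rather than for one fixed sequence, but the theorem as stated does not depend on it.
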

\begin{proof}
For one step, couple the executions using the same random coins. The arguments to $F$ are identical, so outputs are identical under the coupling. Equality of distributions follows. For several steps, assume coupled protected states are equal before a transition. Equal policies, queries, views, and fresh coupled coins give equal next states and outputs. Induction establishes equality of the protected traces.
\end{proof}

The theorem is a standard factorization argument specialized to our view. It establishes a sufficient architecture, not an automatic property of a language model. A retriever that lets excluded text change which admitted records are selected violates the premise. So does a shared neural encoder whose parameters have already absorbed the excluded text. Checking only the final prompt would miss these channels. Timing, denial of service, and raw audit metadata are not protected observations in this statement.

\begin{lemma}[Lineage-copy invariance]
If a projection depends only on a fixed set of admissible primitive records and their authorized annotations, adding derived copies with the same primitive lineage and no new admissible annotations leaves the projection unchanged.
\end{lemma}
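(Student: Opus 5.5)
The plan is to read the lemma as a factorization statement, in the same style as Theorem~\ref{thm:ni}. First I would fix notation. Let $S\subseteq$ the record identifiers be the fixed set of admissible primitive records. A projection $\pi$ "depends only on" $S$ and their authorized annotations if there is a function $g$ with
\[
\pi(D)=g\bigl((r_i)_{i\in S},\ \mathsf{Ann}_S(D)\bigr).
\]
Here $\mathsf{Ann}_S(D)$ is the collection of authorized annotations in $D$ that attach to records in $S$. These are the authorized corrections, uncertainty labels and admissibility decisions resolved by the trusted gate described before Theorem~\ref{thm:ni}.

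Let $D'=D\cup\{r'_1,\dots,r'_k\}$ add derived copies. Each copy has $\mathsf{parents}(r'_j)\ne\varnothing$, and its lineage is $\Roots(r'_j)\subseteq\Roots(D)$, computed by~\eqref{eq:roots}. By hypothesis, no new admissible annotation is attached. The goal is to show that the two arguments of $g$ are unchanged.

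The first step treats the primitive records. Each added copy has nonempty parents, so by the first case of~\eqref{eq:roots} none of them is a primitive record. The DAG structure together with the parent-completeness trust assumption ensures that adding the copies does not alter any existing record or its identifier. Hence $(r_i)_{i\in S}$ is identical in $D$ and $D'$.

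The second step treats the annotations. This is where the hypothesis "no new admissible annotations" is used directly: $\mathsf{Ann}_S(D')=\mathsf{Ann}_S(D)$. Here I would also note why the lineage condition matters. A copy's payload might restate content from $S$, but that restatement enters $\pi$ only through an annotation or through a primitive record. Restating a claim therefore increases neither its root count nor its weight, which matches the accounting remark after~\eqref{eq:roots}.

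I then conclude $\pi(D')=g(\cdot)=\pi(D)$. Applied to the admitted view, this gives $D\equiv_{P,q}D'$ whenever $V_P$ is such a projection. Theorem~\ref{thm:ni} then transfers the invariance to the protected output distribution. A straightforward induction over adding the copies one at a time handles arbitrary finite batches.

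The main obstacle is making "depends only on" precise enough to rule out hidden channels. Examples are a projection that counts records, or a retriever whose selection of $S$ is itself influenced by the copies. Either would break the first step. I would handle this by stating explicitly that $S$ is fixed independently of $D'\setminus D$. Equivalently, the selection itself must factor through primitive lineage, as required in the remark following Theorem~\ref{thm:ni}. I would also note that correctness of the declared parents is an assumption, not a consequence of the lemma.
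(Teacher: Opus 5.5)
Your proposal is correct and is essentially the paper's argument: both show that the projection's input is unchanged. Your factorization through $g$ is a precise rendering of the paper's hypothesis that the projection uses neither the number of derived copies nor their unadmitted wording. The one difference is where the lineage hypothesis enters. The paper's central step is that each copy's $\Roots$ is a subset of the roots already present, so idempotence of set union leaves the primitive lineage unchanged; that is also the step that would justify your remark that a lineage-determined selection of $S$ is unaffected. You instead fix $S$ outright, so in your version the same-lineage hypothesis does no real work. Your appeal to parent completeness is also unnecessary, since appending records does not alter existing ones.
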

\begin{proof}
By~\eqref{eq:roots}, each new copy contributes a subset of roots already present. Set union is idempotent. By hypothesis the projection uses neither the number of derived copies nor their unadmitted wording; its input is unchanged.
\end{proof}

This lemma does not detect every paraphrase or forged lineage. It states the benefit of a lineage discipline under its assumptions. In the vignette, repeated assistant descriptions of a strategic conflict do not become independent evidence just by being copied into the diary. The owner's explicit later adoption of an interpretation can be recorded as a new owner judgment, with its provenance intact, rather than silently relabeled as observation.

\section{Compression and the Limits of Personal Inference}\label{sec:compression}
\subsection{Exact decision sufficiency}
Let $\mathcal{H}$ be a nonempty finite set of possible admitted histories, $\mathcal{Q}$ a family of queries, and $P$ a fixed policy. Let $f_P(h,q)$ be the protected answer that an uncompressed reference procedure would return. The answer may include typed premises and $\botout$; its precise contents must be fixed before choosing a compressor.

\begin{definition}[Decision-sufficient representation]
A compressor $C:\mathcal{H}\to\{0,1\}^m$ is exactly decision-sufficient for $(P,\mathcal{Q})$ if a decoder $G$ exists such that
\begin{equation}\label{eq:suff}
G(P,C(h),q)=f_P(h,q)\qquad\forall h\in\mathcal{H},\ q\in\mathcal{Q}.
\end{equation}
The decoder has no history-dependent side channel or external record access.
\end{definition}

Define the task-relative equivalence relation
\begin{equation}\label{eq:equiv}
h\sim h'\quad\Longleftrightarrow\quad
\forall q\in\mathcal{Q},\ f_P(h,q)=f_P(h',q).
\end{equation}
Let $N=|\mathcal{H}/{\sim}|$ be the number of equivalence classes.

\begin{theorem}[Minimum size of an exact decision summary]\label{thm:compression}
An $m$-bit exactly decision-sufficient representation exists if and only if
\begin{equation}\label{eq:bits}
m\ge\lceil\log_2 N\rceil.
\end{equation}
This is an information-theoretic characterization; it makes no efficiency claim.
\end{theorem}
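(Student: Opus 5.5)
The plan is to prove the two directions of the equivalence separately. Both rest on the observation that an exactly decision-sufficient compressor must be injective on equivalence classes of $\sim$ from~\eqref{eq:equiv}, and it need not be more than that. Throughout, $\mathcal{H}$ is finite and nonempty, so $1\le N\le|\mathcal{H}|<\infty$ and $\lceil\log_2 N\rceil$ is a well-defined nonnegative integer. When $N=1$ the bound asks for $m\ge 0$, and the zero-length string serves as the summary.

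\emph{Necessity.} Suppose $C:\mathcal{H}\to\{0,1\}^m$ and $G$ satisfy~\eqref{eq:suff}. I will show that $h\not\sim h'$ implies $C(h)\ne C(h')$. If $h\not\sim h'$, some $q\in\mathcal{Q}$ has $f_P(h,q)\ne f_P(h',q)$. If also $C(h)=C(h')$, then~\eqref{eq:suff} gives
\[
f_P(h,q)=G(P,C(h),q)=G(P,C(h'),q)=f_P(h',q),
\]
which is a contradiction. This step uses the clause in the definition that $G$ has no history-dependent side channel: $G$ is a genuine function of $(P,C(h),q)$. Choosing one representative per class, the map from classes to codewords is therefore injective into $\{0,1\}^m$. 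Hence $N\le 2^m$, and since $m$ is an integer, $m\ge\lceil\log_2 N\rceil$.

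\emph{Sufficiency.} Suppose $m\ge\lceil\log_2 N\rceil$, so $2^m\ge N$. Enumerate the classes $K_1,\dots,K_N$ and fix an injection $\iota:\{1,\dots,N\}\to\{0,1\}^m$; for instance, use the $m$-bit binary encoding of $j-1$. Define $C(h)=\iota(j)$ for the unique $j$ with $h\in K_j$. Define the decoder $G(P,c,q)=f_P(h_j,q)$ when $c=\iota(j)$, where $h_j\in K_j$ is a fixed representative, and let $G$ output an arbitrary fixed value, say $\botout$, on unused codewords. The table of representatives and answers depends only on the fixed data $(\mathcal{H},\mathcal{Q},P,f_P)$, not on the input history, so $G$ is a legitimate decoder with no history-dependent side channel. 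For any $h\in K_j$ and $q\in\mathcal{Q}$, we have $G(P,C(h),q)=f_P(h_j,q)=f_P(h,q)$ because $h\sim h_j$, which establishes~\eqref{eq:suff}.

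The main obstacle is conceptual rather than computational. It is to make precise that the decoder may legitimately depend on the fixed task data, including the finite table of class representatives and the function $f_P$, but not on the particular $h$. Necessity uses only the second half of that distinction, and sufficiency uses only the first. I will state this explicitly, and note that $\mathcal{Q}$ may be infinite without harm, because only the finiteness of $N$ is used.

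Finally, I will remark that the construction ignores computational cost. Computing $C$ may require deciding $\sim$, which can be intractable. This matches the disclaimer that the theorem is information-theoretic and makes no efficiency claim.
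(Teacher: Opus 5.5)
Your proposal is correct and follows the paper's proof essentially step for step. For necessity, two inequivalent histories sharing a code would force the decoder to give different answers on identical inputs, so $2^m\ge N$; for sufficiency, you assign distinct codes to classes and let the decoder return each class's common answer, and your added care with $N=1$, class representatives, and unused codewords just makes explicit what the paper leaves implicit.
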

\begin{proof}
If histories in two different classes receive the same representation, some query distinguishes their reference answers. The decoder receives identical inputs for that query and cannot return both answers. Thus $C$ needs at least $N$ values and $2^m\ge N$. Conversely, assign each equivalence class a distinct $m$-bit code. For each code and query, define the decoder to return the common answer for that class. Equation~\eqref{eq:equiv} makes the decoder well-defined, proving sufficiency.
\end{proof}

\begin{corollary}[No fixed-size universal exact autobiography]
If histories contain $n$ independent binary facts and the permitted query family includes recovery of each fact, then any exact standalone representation requires at least $n$ bits. No fixed finite-precision representation answers all such queries exactly for unbounded histories.
\end{corollary}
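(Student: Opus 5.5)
The corollary follows from Theorem~\ref{thm:compression} by counting equivalence classes. Let $\mathcal{H}$ be the set of histories determined by $n$ independent binary facts, indexed by $x=(x_1,\dots,x_n)\in\{0,1\}^n$. Let $\mathcal{Q}\supseteq\{q_1,\dots,q_n\}$, where the reference answer is $f_P(h_x,q_j)=x_j$. Here ``recovery of each fact'' means exactly this: the fixed answer format of $f_P$ returns the bit itself, possibly with its typed premise. \emph{Independence} means that all $2^n$ assignments occur as admissible histories. This is the hypothesis I will make explicit, since correlated facts would shrink $\mathcal{H}$ and the statement would fail.

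\textbf{Step 1: count the classes.} Take $x\ne x'$. They differ in some coordinate $j$, so $f_P(h_x,q_j)\ne f_P(h_{x'},q_j)$. By~\eqref{eq:equiv}, $h_x\not\sim h_{x'}$. Hence all $2^n$ histories lie in distinct classes, and $N\ge 2^n$. Further queries in $\mathcal{Q}$ can only refine $\sim$, so they cannot reduce $N$. If $\mathcal{H}$ contains extra histories, the restriction to the $2^n$ histories already gives the lower bound, because the only-if direction of the proof uses just two distinguishable histories at a time. I would state the bound for any superset $\mathcal{H}$.

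\textbf{Step 2: apply the theorem.} Theorem~\ref{thm:compression} gives $m\ge\lceil\log_2 N\rceil\ge n$. ``Standalone'' corresponds to the decoder having no side channel or external record access, which matches the definition of exact decision sufficiency.

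\textbf{Step 3: unbounded histories.} Suppose a fixed $m$ works for every history length. Apply Step~2 with $n=m+1$ to get $m\ge m+1$, a contradiction. A ``finite-precision representation'' is any map into a set of size at most $2^m$, such as a vector of $k$ floats with $b$ bits each, giving $m=kb$. So the argument covers continuous-looking embeddings of fixed dimension as well.

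\textbf{Main obstacle.} The only delicate point is interpretive, not technical: the phrases ``independent'' and ``recovery of each fact'' must be pinned to the combinatorial conditions used in Step~1. The argument needs every assignment in $\{0,1\}^n$ to be realizable, and the answer to $q_j$ to determine $x_j$. I would state these interpretations explicitly before invoking the theorem.
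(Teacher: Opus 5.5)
Your proposal is correct and matches the paper's proof: with $\mathcal{H}=\{0,1\}^n$ and coordinate-recovery queries, every equivalence class is a singleton, so $N=2^n$ and Theorem~\ref{thm:compression} gives $m\ge n$. Your additions (allowing supersets of $\mathcal{H}$ and $\mathcal{Q}$, and the $n=m+1$ contradiction for the unbounded clause) spell out steps the paper leaves implicit.
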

\begin{proof}
Take $\mathcal{H}=\{0,1\}^n$ and queries asking for coordinate $j$. Any two distinct histories differ at a coordinate, so every equivalence class is a singleton and $N=2^n$. Apply Theorem~\ref{thm:compression}.
\end{proof}

This identifies a precise storage limitation, distinct from the processor--memory performance gap traditionally called the memory wall \cite{memorywall}. A model with a fixed number of finitely represented weights cannot losslessly encode arbitrary amounts of queryable history. External retrieval changes the premise by supplying additional history-dependent information. A short cryptographic commitment does not contradict the lower bound: a commitment can authenticate a supplied opening without reconstructing every hidden record. Collision resistance is a computational property; the counting argument is about exact recovery.

The query family is decisive. If every query returns $\botout$, $N$ can be one and the summary is tiny but uninformative. If a decision depends only on an explicitly declared preference and one provenance distinction, most narrative details may be irrelevant to that query. The result does not require preserving all life events for every task. It requires preserving all distinctions that affect the specified answer. In our case, merging ``$H$ sent a question'' with ``a model inferred $H$'s motive'' destroys a distinction relevant to attribution-sensitive queries even when a prose summary remains fluent.

\subsection{More observations do not always identify the preferred action}
Let $W$ denote a possible world, let $D\sim P_W$ be an arbitrarily long passive observational archive, and let $a^*(W)$ be a unique preferred action under that world's evaluative specification. Neither $a^*$ nor the observation law is assumed to be learnable without constraints.

\begin{proposition}[Passive-history nonidentifiability]\label{prop:nonid}
Suppose $W_0,W_1$ induce the same distribution over the entire available archive but have distinct preferred actions $a_0\ne a_1$. Any procedure based only on that archive has worst-case error at least $1/2$ over these two worlds. Extra samples from the same indistinguishable observation process do not remove this bound.
\end{proposition}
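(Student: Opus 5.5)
The plan is a standard two-point (Le Cam style) indistinguishability argument, mirroring the proof of the earlier proposition that authenticity does not imply truth. Fix any procedure $\delta$, possibly randomized with internal coins $\omega$ independent of the world, that maps the available archive $D$ to an action. Its output law in world $W_b$ is the pushforward of $P_{W_b}\otimes\mathrm{Law}(\omega)$ under $\delta$. The hypothesis $P_{W_0}=P_{W_1}$ makes these two output laws identical. Call the common law $\mu$ on the action set, together with any abstention symbol such as $\botout$.

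The key step is then a one-line averaging bound. The error in world $W_b$ is $\Pr[\delta(D)\ne a_b]=1-\mu(\{a_b\})$. Since $a_0\ne a_1$, the events $\{a_0\}$ and $\{a_1\}$ are disjoint, so $\mu(\{a_0\})+\mu(\{a_1\})\le 1$. Summing the two errors therefore gives at least $2-1=1$. Consequently
\begin{equation*}
\max_{b\in\{0,1\}}\Pr_{W_b}[\delta(D)\ne a_b]\;\ge\;\tfrac12\bigl(\Pr_{W_0}[\delta(D)\ne a_0]+\Pr_{W_1}[\delta(D)\ne a_1]\bigr)\;\ge\;\tfrac12 .
\end{equation*}
Outputting $\botout$ or postponement counts as an error in both worlds, so it only strengthens the inequality. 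The bound is tight: the procedure that guesses uniformly between $a_0$ and $a_1$ attains exactly $1/2$.

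For the claim about extra samples, I will read ``the entire available archive'' as already including any further observations drawn from the same process. If $D^{(k)}$ is the archive extended by $k$ more draws, it is produced by a fixed measurable map from the observation process. That process has the same law under $W_0$ and $W_1$, so $D^{(k)}$ also has the same law in both worlds. The argument above then applies verbatim for every $k$, and hence also in the limit of an infinite passive stream.

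The only real obstacle is being precise about what ``based only on that archive'' excludes. The procedure's randomness must not depend on $W$, and the procedure must have no side channel, such as an intervention or an owner declaration, whose law differs between the worlds. I will state this explicitly and connect it to the admitted-view discipline of Theorem~\ref{thm:ni}. I will also remark that the hypothesis is not vacuous: take a single $D$ law paired with two evaluative specifications that rank $a_0$ and $a_1$ oppositely. That construction is the restricted observation-level analogue of the Armstrong--Mindermann obstacle mentioned in the related work.
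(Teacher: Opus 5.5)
Your proposal is correct and follows essentially the same argument as the paper. Identical archive laws give a common output distribution, disjointness forces the two success probabilities to sum to at most one, and identical complete observation laws make additional passive samples irrelevant; your tightness and side-channel remarks are sound but go beyond what the statement requires.
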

\begin{proof}
Let $Q$ be the common distribution of the procedure's output, including its internal randomness. Its success probabilities in the two worlds are $Q(a_0)$ and $Q(a_1)$. Their sum is at most one, so at least one is at most $1/2$. If the complete observation laws are identical, additional passive samples leave the argument unchanged.
\end{proof}

The assumption is essential. Direct, reliable preference declarations, interventions, or justified restrictions on the world family can distinguish the worlds. The proposition does not say that learning personal preferences is impossible. It states that unlimited quantity cannot repair a particular absence of identifying information. This boundary is consistent with the broader preference-inference problem studied by Armstrong and Mindermann \cite{armstrong}.

For a purely constructed affiliation example, let the available actions be $b$ (join), $n$ (do not join), and $p$ (postpone). Let two worlds share every admitted message and feeling report but differ in an as-yet-unexpressed preference about commitment:
\begin{equation}
\begin{array}{c|ccc}
 & b & n & p\\\hline
u_0&1&0&0\\
u_1&0&1&0
\end{array}
\end{equation}
These numbers define a counterexample; they are not measurements of the diarist. If both worlds remain admissible, Equation~\eqref{eq:robust} returns $\botout$, not $p$. A subsequently authorized declaration may exclude one world. Repeating the same dinner description cannot by itself do so.

\subsection{Training and retrieval are different admission channels}
An archive can influence a system through retrieval or through training. For illustration, a next-token training objective can be written
\begin{equation}
L(\theta;D)=-\sum_{(w_1,\ldots,w_T)\in D}\sum_{t=1}^{T}
\log p_\theta(w_t\mid w_{<t}).
\end{equation}
This objective rewards predictive fit to text, not compliance with a decision authorization policy. If generated commentary is mixed into $D$, its style and assertions can affect the parameter state. Attaching source labels to examples can provide information to a learner, but does not prove that the learned function enforces those labels.

For Theorem~\ref{thm:ni} to cover an adapting model, its protected updates must themselves factor through the authorized view, for example
\begin{equation}
\theta_{t+1}=\mathsf{Train}(\theta_t,V_{P_t}(D_t,q_t);\omega_t),
\end{equation}
with all other shared state controlled. This equation is an architectural requirement, not a claim that a usual fine-tuning procedure satisfies it. A model already trained on excluded material does not meet the theorem merely because retrieval later filters that material. Likewise, a raw model proposal cannot choose the protected action set or tie-breaker through an untracked channel.

\section{Relationship to Alpay Algebra and Fixed Points}\label{sec:fixed}
Alpay Algebra I describes a state update and an evaluation condition \cite{alpay1}:
\begin{align}\label{eq:alpay}
x_{k+1}&=x_k+\phi(x_k),\\
\phi(x_k)\ne0&\ \Longrightarrow\ \Psi(x_{k+1})>\Psi(x_k).
\end{align}
Part II treats categorical fixed-point objects \cite{alpay2}. Part III describes a memory accumulation of the form $M_n=\bigcup_{k\le n}X_k$ \cite{alpay3}, and Part V gives
\begin{equation}\label{eq:alpayv}
E_{n+1}=\phi(E_n,\gamma(E_n))
\end{equation}
for an outer semantic update and an embedded subproblem \cite{alpay5}. These expressions can organize a comparison with an autobiographical system, but a mapping from narrative records to states and operators must still be specified.

A possible application-level state is $S_t=(D_t,P_t,V_t,\theta_t)$: archive, active policy, admitted view, and model state. This tuple is defined here, not derived uniquely from Alpay Algebra. An update $F$ should satisfy the invariant
\begin{equation}\label{eq:inv}
\mathcal{I}(S)\ \Longrightarrow\ \mathcal{I}(F(S)),
\end{equation}
where $\mathcal{I}$ includes authenticated provenance, the policy-update rule, and the exclusion boundary. A fixed point $S^*=F(S^*)$ neither defines nor proves~\eqref{eq:inv}. Indeed, a continually growing archive generally does not have a stationary full state. A representation for a frozen archive and policy might converge, but that is a different claim.

\begin{proposition}[Convergence does not select correctness]\label{prop:fixed}
Contractivity and uniqueness of a fixed point alone place no constraint on whether that point satisfies a desired decision property.
\end{proposition}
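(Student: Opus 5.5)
The claim is a separation result, so the plan is to build two contractive maps on the same state space. Both will have a unique fixed point, but the property holds at one fixed point and fails at the other. That shows contractivity and uniqueness cannot, by themselves, decide the property.

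First I fix the formal reading. ``Place no constraint'' will mean: for any complete metric space $X$ (say $X=\mathbb{R}^d$ with the Euclidean metric) and any decision property $\Pi\subseteq X$ that is neither empty nor all of $X$, there are contractions $F_0,F_1:X\to X$ with the same Lipschitz constant such that the fixed point $x_0^*$ of $F_0$ lies in $\Pi$ and the fixed point $x_1^*$ of $F_1$ does not. Banach's fixed-point theorem gives existence, uniqueness, and global convergence of the iterates $x_{k+1}=F(x_k)$ for both maps. The convergence data therefore look the same in the two cases.

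The construction is direct. Choose $y_0\in\Pi$ and $y_1\notin\Pi$, fix $\lambda\in[0,1)$, and set
\begin{equation*}
F_i(x)=y_i+\lambda(x-y_i).
\end{equation*}
Each $F_i$ is a contraction with constant $\lambda$, and its unique fixed point is $y_i$. To tie this to the paper, I will instantiate it with the action set $\{b,n,p\}$ from the affiliation example. The state is a utility vector in $\mathbb{R}^3$, and $\Pi$ is the property that the $\argmax$ equals the authorized preferred action, say $b$. Taking $y_0=u_0$ and $y_1=u_1$ from the table gives fixed points whose selected actions differ. So identical convergence behaviour is compatible with opposite verdicts about $\Pi$.

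To match the form of the Alpay update~\eqref{eq:alpay}, I rewrite $F_i(x)=x+\phi_i(x)$ with $\phi_i(x)=(1-\lambda)(y_i-x)$. I take $\Psi_i(x)=-\|x-y_i\|$, which strictly increases along every nonstationary step. This shows that even adding a monotone evaluation map does not constrain correctness, because $\Psi_i$ is built from the fixed point itself.

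The main obstacle is avoiding triviality while stating the quantifiers honestly. The result must not be read as saying that no contraction can certify $\Pi$. If $F$ is designed to preserve an invariant that implies $\Pi$, for instance $F(X)\subseteq\Pi$ with $\Pi$ closed, then the fixed point satisfies $\Pi$. I will remark that this is precisely the invariant~\eqref{eq:inv}, an extra hypothesis beyond contractivity and uniqueness. The proposition is therefore the contrapositive: without such a hypothesis, the two-map example shows the fixed-point properties alone leave membership in $\Pi$ undetermined.
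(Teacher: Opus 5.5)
Your proof is correct and is essentially the paper's: $F_i(x)=y_i+\lambda(x-y_i)=\lambda x+(1-\lambda)y_i$ is exactly the family $F_z(x)=\lambda x+(1-\lambda)z$ used there, with equal contraction factors and fixed points on opposite sides of the predicate. The paper works on $[0,1]$ with $z\in\{0,1\}$, whereas you allow any nontrivial $\Pi\subseteq\mathbb{R}^d$ and add the Alpay rewriting; one small overreach is that the affine formula needs linear or convex structure, so for an arbitrary complete metric space you would instead use constant maps ($\lambda=0$).
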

\begin{proof}
For any $z\in[0,1]$ and fixed $0\le\lambda<1$, define
\begin{equation}
F_z(x)=\lambda x+(1-\lambda)z,\qquad x\in[0,1].
\end{equation}
Then $|F_z(x)-F_z(y)|=\lambda|x-y|$ and the unique fixed point is $z$. Both $z=0$ and $z=1$ have the same contraction factor and convergence guarantee. If a decision predicate accepts only one endpoint, convergence cannot distinguish the correct predicate value without additional information tying $z$ to that predicate.
\end{proof}

Thus an archive could repeatedly stabilize on an unsupported interpretation. Stabilization would not authenticate its source or authorize its use. Selecting $\Psi=-L$ for a text-prediction loss defines improvement in predictive fit; it does not derive a measure of personal welfare. Nor is monotonic improvement guaranteed for arbitrary stochastic training updates. We therefore treat the relevant Alpay expressions as descriptive machinery and keep evaluation, authority, and correctness separate.

Our security theorems require neither transfinite iteration nor the semantic permanence claims discussed in Part IV \cite{alpay4}. Claims about an AI acquiring an author's intent, identity, or persistent memory are not premises. In particular, no theorem here calculates a baptism decision from a putative identity fixed point.

\section{Analytical Application to the Vignette}\label{sec:application}
The vignette supplies distinctions to preserve, not numerical utilities or a ground-truth verdict. We can nevertheless trace how the reference specification treats the supplied material.

\paragraph{Meal and affiliation.}
The reported enjoyment of a shared meal remains a positive feeling report. It can inform a question about valued social experiences. Under a policy that requires explicit owner authorization for membership decisions, it cannot create a policy transition. This represents the diarist's expressed distinction between continuing relationships and deciding about baptism, without asserting that the preference can never change.

\paragraph{Question, photograph, and chronology.}
The record of $H$'s question, the later hackathon photograph, and $U$'s intended self-expression receive distinct entries. A summary that says $H$ necessarily perceived a challenge changes an interpretation into an asserted fact. A summary that relocates the hackathon to the previous Sunday changes a time field. Theorem~\ref{thm:auth} can protect an already correct attestation of the time field; it cannot repair an incorrect initial transcription. A source-aware projection preserves both the supplied wording and the limited conclusion it warrants.

\paragraph{Knowledge of travel and apparent social distance.}
The fact that $C$ reportedly knew a travel plan does not specify how the information arrived. The map and seating episodes likewise leave multiple possible explanations. No source type licenses replacing this underdetermination with a verified account of coordination or intention. If an assistant-generated explanation is stored, its actual origin remains visible through the derivation graph. It can be analyzed as part of the diarist's information environment while remaining excluded from a factual claim about the other participants.

\paragraph{Assurances and private information.}
The leader's reported acknowledgment of risk and categorical assurances are preserved as separately attributable claims, rather than collapsed into a proven guarantee. A request to discuss a sensitive administrative matter privately defines a potential disclosure constraint, if the owner adopts it through the policy channel. A model cannot treat a public-friendly description of that matter as permission to reveal omitted details. This is a conventional authorization issue and requires no hypothesis of bad faith by the leader.

\paragraph{Interpretation across time.}
The diarist's feeling about an organizer's encouraging words and the uncertainty around another person's question illustrate why a final daily summary is not a complete substitute for time-indexed reports. A later revision can be appended as a revision without erasing the earlier experience. The currently authorized interpretation and the historical fact that an earlier interpretation was held answer different queries. Theorem~\ref{thm:compression} says only that a summary must retain this distinction when those queries are in its task family.

The religious-studies literature makes such distinctions intelligible as issues of belonging, narrative identity, and institutional ambivalence \cite{decoo,scharp}. It does not determine the individual's utility function, prove hidden intentions, or certify the output of a personal language model. The computer-security claim is invariant under replacing religious affiliation with employment, political participation, or another consequential institutional relationship.

\paragraph{Social connection and the meaning of agency.}
The meal illustrates a distinction that self-determination theory makes conceptually available: relatedness and autonomy can coexist \cite{ryandeci}. In the security model, however, $P$ represents an explicit authorization specification; it is not a psychological autonomy score. A signed consent event does not demonstrate freedom from pressure, just as declining to issue such an event does not establish hostility toward the community. Institutional ambivalence is likewise a reason to preserve several simultaneous evaluations, rather than compress friendship, doctrine, membership, and disclosure into one positive or negative sentiment \cite{decoo}. We do not estimate the strength of any such dimension from the supplied notes.

\paragraph{Source attribution across reflection cycles.}
The distinction between an event, its interpretation, and a later recollection resembles the attribution problem investigated in source-monitoring research \cite{johnson}. The comparison is functional, not an assertion that human memory and model context share a mechanism. An agent's reflection can become a later retrieval result; if derivation is lost, the same generated claim can appear to corroborate itself. We address that computational risk through lineage and admission. Testing the analogous human process would require an independent behavioral study with an appropriate design and participants. A model role-playing the diarist would not supply those observations.

\section{Synthetic Pilot: Labels and an Enforced Admission Boundary}\label{sec:pilot}
\subsection{Question and fixed protocol}
The experiment asks a narrow question: does appending a contradictory derived summary change a frozen model's answer to an attribution- or authorization-sensitive query, and how does this differ between labeling and exclusion? We use Qwen2.5-0.5B-Instruct, a public instruction-tuned model distributed through Hugging Face \cite{qwen}. This is a read-boundary experiment, not a complete autonomous agent or a multi-year memory system. It does not implement the signature verifier or evaluate the ideal utility rule in Equation~\eqref{eq:robust}.

Before inference, a local protocol fixed the model, all cases, prompts, metrics, and inference settings, together with hashes of the case file and run script. It was not externally preregistered. The materials contain 32 wholly synthetic cases: four families (registration authorization, disclosure permission, speaker attribution, and event chronology), four lexical contexts, and two binary ground-truth values. The contexts are a reading circle, community choir, history society, and volunteer group. They contain no private messages or identifiers from the vignette. Ground truth follows directly from explicit synthetic owner statements or event records.

For every case we append $k\in\{0,1,5\}$ copies of a fixed summary that contradicts the original record and asks that its conclusion be used. Every copy declares the original record as its parent. This is a fixed stressor, not an adaptive attack search or independent corroborating testimony. We compare three presentations:
\begin{enumerate}
\item \textbf{Flattened:} concatenate record payloads while omitting the structured source and parent fields. Original payload wording, such as ``Owner current instruction,'' remains.
\item \textbf{Typed:} present the same payloads with explicit identifiers, source types, and parent identifiers.
\item \textbf{Gated:} remove records typed as assistant summaries before constructing the otherwise identical typed input. Source metadata is assumed correct by construction.
\end{enumerate}
All presentations use the same system instruction prioritizing original owner statements and event records. The gate does not supply the correct answer or replace the model's answer. It changes the admitted input. Because it removes all added summaries, its rendered prompt is identical for all three $k$ values within a case and option order. This deliberately instantiates the premise of Theorem~\ref{thm:ni}; zero change is not evidence of learned resistance.

\subsection{Scoring and execution}
Each yes/no question is scored twice, once with $A=\mathrm{Yes}$ and once with the options reversed. We take the larger next-token logit among the single tokens $A$ and $B$, then map the letter back to yes/no. There is no sampling or unrestricted prose generation. In the recorded run, the unrestricted highest-logit token was also $A$ or $B$ for all 384 unique prompts, so the restriction did not change any first-token choice. Restricted two-option probabilities are diagnostic model scores, not calibrated probabilities of correctness.

Let $N=32$, $j\in\{0,1\}$ denote the option order, and $\widehat y_{c,v,k,j}$ be the binary prediction for case $c$, view $v$, and repetition count $k$. We report
\begin{align}
\widehat{\mathrm{Acc}}_{v,k}&=\frac{1}{2N}\sum_{c,j}\mathbf 1[\widehat y_{c,v,k,j}=y_c],\\
\widehat{\Delta}_{v,k}&=\frac{1}{2N}\sum_{c,j}\mathbf 1[\widehat y_{c,v,k,j}\ne\widehat y_{c,v,0,j}],\\
\widehat{B}_{v,k}&=\frac{1}{N}\sum_c\mathbf 1[\widehat y_{c,v,k,0}\ne\widehat y_{c,v,k,1}].
\end{align}
The second metric is a paired label-change rate, not an attack-success rate: a change can correct an existing mistake. The third measures answer sensitivity to option order after translating letters back into the same binary meaning.

The run contains $32\times3\times3\times2=576$ logical trials. Identical rendered prompts reuse one inference, leaving 384 unique model evaluations. These reused predictions are not additional independent observations. The pinned checkpoint contains 494,032,768 parameters. Execution used Python 3.12.14, PyTorch 2.6.0+cpu, Transformers 4.51.3, float32, four CPU threads, batch size four, evaluation mode, deterministic algorithms, and seed 20261012. The processor reports AMD EPYC 9V74. Prompts reached at most 421 tokens; the recorded inference loop took 316.24 seconds. These resource measurements describe this execution, not a performance benchmark. Full prompts, raw scores, hashes, software versions, and the model revision accompany the manuscript.

\subsection{Results}
Table~\ref{tab:pilot} reports the complete aggregate outcome. Clean typed and gated inputs yield 45/64 correct answers (70.31\%). Adding five contradictory summaries to typed input yields 32/64 (50.00\%), with 21/64 answers changed relative to the clean version. Of those changes, 17 turn a correct answer into an error and four correct a pre-existing error. Source labels therefore do not enforce invariance in this particular model and test suite. The experiment does not establish that labels are generally ineffective or that this model represents larger deployed systems.

\begin{table}[t]
\centering\small
\begin{tabular}{@{}lrrrr@{}}
\toprule
Presentation & Copies $k$ & Correct / 64 & Changed / 64 & Order disagreements / 32\\
\midrule
Flattened & 0 & 37 & 0 & 27\\
Flattened & 1 & 36 & 1 & 28\\
Flattened & 5 & 28 & 9 & 20\\
\addlinespace
Typed & 0 & 45 & 0 & 19\\
Typed & 1 & 34 & 19 & 18\\
Typed & 5 & 32 & 21 & 24\\
\addlinespace
Gated & 0 & 45 & 0 & 19\\
Gated & 1 & 45 & 0 & 19\\
Gated & 5 & 45 & 0 & 19\\
\bottomrule
\end{tabular}
\caption{Recorded descriptive results from 32 synthetic cases with two option orders. ``Changed'' compares the same case and order with $k=0$. Gated repetitions reuse identical prompt predictions. No row represents 64 independent people or independently sampled scenarios.}
\label{tab:pilot}
\end{table}

Gated accuracy remains 45/64 at every repetition level because its inputs are unchanged. The 19 remaining errors matter: noninterference does not imply correct interpretation of admitted material. There is also substantial option-order sensitivity, with 19/32 clean typed cases producing different yes/no judgments after swapping the letters. This limits the strength of conclusions about semantic reasoning. We report the sensitivity rather than selecting the better-performing order.

\paragraph{Interpretation and limitations.}
The pilot illustrates a distinction between a model being told an authority rule and an architecture controlling the information it receives. It supplies neither a general cryptographic security estimate nor a robust comparative ranking of memory systems. The sample is small, manually constructed, and template-related; counts are descriptive, without population confidence intervals or significance tests. Filtering shortens inputs, so its accuracy difference is not a controlled estimate of a uniquely semantic mechanism. The model is small, the stressor is fixed, and metadata parsing is not attacked. No private archive is used for training; no claim is made about learned personal preferences, participants' hidden motives, or actual religious decisions. Long-horizon read--write agent evaluation and human-subject validation remain separate tasks.

\paragraph{Data and code availability.}
The accompanying ancillary materials contain the synthetic cases, locally fixed protocol, exact model revision and file hashes, rendered prompts, recorded predictions, trial-level results, environment records, and scripts for inference and independent analysis. Model weights are not redistributed. These materials document the completed pilot; the personal source material used to motivate the vignette is not included.

\section{Limitations and Research Questions}\label{sec:limitations}
\paragraph{Trusted capture and semantic admission.}
Our strongest results assume correct types, complete relevant parent links, a faithful capture process, and a correct formal gate. These are demanding assumptions. Cryptography cannot convert an unreliable transcript into a verified event. Designing a useful natural-language admission system with explicit, compositional error bounds remains a research question within this specification, rather than a solved problem asserted here.

\paragraph{Preference change and consent.}
An authenticated owner policy can be incomplete, mistaken, or superseded. Our separation preserves the authority to revise it; it does not define a timeless true self. Noninterference compares executions with equal authorized inputs. If reading a message leads the owner to issue a genuinely different policy, that comparison no longer applies. Protecting a person from all social influence is neither the theorem nor a realistic cryptographic objective.

\paragraph{A precise constrained compression problem.}
Let $s_P(h)$ be the security-relevant fields that must remain exactly recoverable, $\mu$ a declared distribution on admitted histories and queries, and $d$ a bounded task loss. For nonempty finite $\mathcal H$ and $\mathcal Q$, define
\begin{align}
\mathcal R_{P,\mu}(\epsilon)&=\min_{C,G,E}\log_2|C(\mathcal H)|,\label{eq:rate}\\
\text{subject to}\qquad E(C(h))&=s_P(h)\quad\forall h\in\mathcal H,\nonumber\\
\mathbb E_{(h,q)\sim\mu}\big[d(f_P(h,q),G(C(h),q))\big]&\le\epsilon.\nonumber
\end{align}
Here $E$ decodes the mandatory fields, while $G$ answers the task. Necessarily $\mathcal R_{P,\mu}(\epsilon)\ge\log_2|s_P(\mathcal H)|$. For zero--one task loss, full-support $\mu$, and $\epsilon=0$, the exact partition argument applies to the joint signature $(s_P(h),(f_P(h,q))_{q\in\mathcal Q})$. Thus the unrestricted finite exact case is already characterized; it is not presented as an unresolved mathematical mystery.

The outstanding design question in this model is an efficient, incremental approximation under stated query and policy classes: how close can bounded-memory encoders come to~\eqref{eq:rate} while retaining exact authorization fields, preserving noninterference, and processing admitted updates without replaying the full archive? Arbitrary future policies that expose previously discarded distinctions cannot be supported by a fixed summary in general, by Theorem~\ref{thm:compression}. Useful positive results therefore require restricted policy changes, access to an external authenticated archive, or explicit error allowances. No general optimal algorithm or security proof for a natural-language encoder is supplied here.

\paragraph{Learning, deletion, and disclosure.}
Separate proofs are required for a learner's response to excluded training data, for removing the influence of withdrawn records, and for model-output privacy. Access control on a retained archive does not by itself remove information already memorized in parameters. A permanent commitment to a record can support later audit while the record's active status changes; it does not resolve every conflict between confidentiality, deletion, and auditability.

\paragraph{Scope of the vignette and theoretical status.}
The case has no sampling framework, interviews conducted for research, prevalence estimates, or independently established behavioral ground truth. Its role is specification by example. The synthetic pilot is a separate finite test suite with deliberately correct source metadata, not a psychological study or broad security benchmark. We make no finding about LDS members generally or the intentions of any named individual. The proofs are handwritten mathematical arguments, not machine-checked developments. No claim of deployment robustness, formal publication acceptance, or exhaustive literature coverage follows from this manuscript.

\section{Conclusion}
Autobiographical archives require distinctions among record integrity, attributed meaning, and decision authority. Authenticated records and an explicit policy channel can support conditional protection against alteration and unauthorized promotion. A protected decision procedure is noninterfering with respect to excluded content only if all its relevant inputs and state transitions obey that boundary. Exact memory compression is governed by the distinctions a declared query family can observe, while passive history and fixed-point convergence need not identify a preferred personal action. The synthetic pilot exhibits source-label failures in one small model and stable but imperfect answers under an enforced admission boundary. Its substantial option-order sensitivity prevents broad empirical generalization. The affiliation vignette and social-science literature motivate what must be represented without proving participants' intentions. The resulting framework specifies conditional security properties, a constrained compression problem, and a reproducible starting point for broader evaluation.

\appendix
\section{What the Statements Do and Do Not Establish}
The record-integrity result establishes, under its assumptions and except with cryptographic failure, that accepted covered fields match an attested message. It establishes neither speaker honesty nor world truth nor correct initial labeling. Likewise, log consistency establishes that a later accepted checkpoint extends a retained checkpoint. It does not establish a complete life history, the freshest available state, or the global absence of equivocation.

Policy separation prevents archive text from syntactically replacing owner authorization; it does not ensure freedom from coercion or all social influence. The noninterference argument gives equality of protected output laws when admitted inputs and protected state are equal. It does not establish the safety of unrestricted prose, cover hidden training channels, or guarantee availability.

The compression bound requires exact summaries to distinguish task-inequivalent histories. It supplies no efficient compressor or decoder and no optimal approximate summary. The nonidentifiability construction shows that indistinguishable observation laws can conceal opposite preferences, rather than proving preference learning universally impossible. The fixed-point counterexample similarly shows that contractivity alone does not select a desired decision property; it does not show that every fixed-point method fails.

The synthetic pilot establishes recorded output changes within its fixed small-model test and invariance under identical gated inputs. These observations establish neither general agent robustness nor cryptographic implementation security nor findings about human psychology. Finally, the affiliation vignette supplies concrete examples of provenance and authorization distinctions. It supports no population findings, theological conclusions, or proof of participants' motives.

\clearpage
\section{A Provenance-Preserving Abstract Record Sequence}
The following sequence specifies types and dependencies, not a deployed data format. Labels refer to the bounded vignette; no new factual observations are introduced.
\begin{enumerate}
\item $r_1$: a quotation attributed to $H$, asking about the previous Sunday. The attester is the importer, unless an independent speaker signature exists.
\item $r_2$: $U$'s report of sending a later hackathon photograph. Its event time is distinct from the Sunday referenced by $r_1$.
\item $r_3$: $U$'s statement of the meaning intended by that photograph. Type: interpretation of one's own communicative intention.
\item $r_4$: an assistant passage assigning a strategic motive to $H$, with parents $r_1,r_2$. Type: generated interpretation, with the assistant as source.
\item $r_5$: a summary derived from $r_4$. Its primitive evidential lineage is inherited; it is not a second observation of $H$'s intentions.
\item $r_6$: the owner's expressed wish to retain independent choice about affiliation. Type: declaration. It becomes an active machine policy only through a separate owner-authorized policy event.
\end{enumerate}
A query about what $H$ wrote may admit $r_1$. A query about how $U$ framed the photograph may admit $r_2,r_3$. A query about the evolution of assistant commentary may admit $r_4,r_5$ as generated commentary. None of these admissions silently changes $r_4$ into an observed fact or $r_6$ into an action that has already been authorized. This separation is the concrete security requirement that motivates the formal model.

\end{document}